\newtheorem{Proposition}{Proposition}
\newtheorem{remark}{Remark}
\def\BibTeX{{\rm B\kern-.05em{\sc i\kern-.025em b}\kern-.08em
    T\kern-.1667em\lower.7ex\hbox{E}\kern-.125emX}}
\begin{document}

\title{Intervention Strategies for Fairness and Efficiency at Autonomous Single-Intersection Traffic Flows}

\author{Salman Ghori, Ania Adil, Melkior Ornik, Eric Feron
\thanks{Salman Ghori, Ania Adil, and Eric Feron are with Computer, Electrical and Mathematical  Science $\&$ Engineering Division (CEMSE), KAUST, Thuwal 23955-6900, Saudi Arabia (email: {salman.ghori@kaust.edu.sa, ania.adil@kaust.edu.sa, eric.feron@kaust.edu.sa}). Melkior Ornik is with the Department of Aerospace Engineering and the Coordinated Science Laboratory, University of Illinois at Urbana-Champaign, Urbana, IL 61801 USA (email: mornik@illinois.edu).}
\markboth{IEEE TRANSACTIONS ON INTELLIGENT TRANSPORTATION SYSTEMS}{IEEE TRANSACTIONS ON INTELLIGENT TRANSPORTATION SYSTEMS}%
}

\maketitle

\begin{abstract}
Intersections present significant challenges in traffic management, where ensuring safety and efficiency is essential for effective flow. However, these goals are often achieved at the expense of fairness, which is critical for trustworthiness and long-term sustainability. This paper investigates how the timing of centralized intervention affects the management of autonomous agents at a signal-less, orthogonal intersection, while satisfying safety constraints, evaluating efficiency, and ensuring fairness. A mixed-integer linear programming (MILP) approach is used to optimize agent coordination within a circular control zone centered at the intersection. We introduce the concept of fairness, measured via pairwise reversal counts, and incorporate fairness constraints into the MILP framework. We then study the relationship between fairness and system efficiency and its impact on platoon formation. Finally, simulation studies analyze the effectiveness of early versus late intervention strategies and fairness-aware control, focusing on safe, efficient, and robust management of agents within the control zone.

\end{abstract}

\begin{IEEEkeywords}
Traffic management, intersection, autonomous agents, reversal, fairness, delay,  mixed-integer linear programming, platoons.
\end{IEEEkeywords}

\section*{Nomenclature}

\subsection*{Indices, Sets, and Spaces}
\begin{IEEEdescription}[\IEEEsetlabelwidth{$\mathbb{R}^{n\times m}$}]
  \item[$\mathbb{R}$] Set of real numbers.
  \item[$\mathbb{R}_+$] Nonnegative reals: $\{\,x\in\mathbb{R}:x\ge 0\}$.
  \item[$\mathbb{R}_+^{\ast}$] Strictly positive reals: $\{\,x\in\mathbb{R}:x>0\}$.
  \item[$\mathbb{R}^n$] $n$-dimensional Euclidean space (column vectors).
  \item[$\mathbb{R}^n_+$] Nonnegative orthant: $\{\,x\in\mathbb{R}^n : x_i\ge 0\;\forall\,i\}$.
  \item[$\mathbb{R}^{n\times m}$] Set of real $n\times m$ matrices.
  \item[$G$] Total number of agents.
  \item[$N$] Total number of discrete time steps in the horizon.
  \item[$q,r$] Agent indices, $q,r \in \{\,1,\dots,G\}$.
  \item[$i$] Discrete time index, $i \in \{\,0,\dots,N\}$.
  
\end{IEEEdescription}

\subsection*{Decision Variables}
\begin{IEEEdescription}[\IEEEsetlabelwidth{$s_{q,i,k}\in\mathbb{R}_+$, \quad $k=1,2,3,4$}]
  \item[$p_{q,i}^x,\,p_{q,i}^y\in\mathbb{R}$]  
    Position [m] of agent $q$ in the $x$- and $y$-coordinates at time $i$.

  \item[$v_{q,i}^x,\,v_{q,i}^y\in\mathbb{R}$]  
    Velocity [m/s] of agent $q$ in the $x$- and $y$-directions at time $i$.

  \item[$x_{q,i}\in\mathbb{R}^4$]  
    State vector of agent $q$ at time $i$,
    \[
      x_{q,i} = \bigl(p_{q,i}^x,\;p_{q,i}^y,\;v_{q,i}^x,\;v_{q,i}^y\bigr)^\top.
    \]

  \item[$u_{q,i}\in\mathbb{R}^2$]  
    Control input [m/s$^{2}$] of agent $q$ at time $i$,
    \[
      u_{q,i} = \bigl(u_{q,i}^x,\;u_{q,i}^y\bigr)^\top.
    \]

  \item[$s_{q,i,k}\in\mathbb{R}_+$, \quad $k=1,\dots,4$]  
    Slack variables for the absolute deviation of each state component from its desired final value at time $i$.

  \item[$V_{q,i}\in\mathbb{R}_+$]  
    Velocity [m/s] of agent $q$ at time $i$; used in the objective to reward higher speed.

  \item[$O_{q,r}\in\{0,1\}$]  
    Binary variable indicating priority reversal between agents $q$ and $r$.

  \item[$\varpi_{q,i},\, \varpi_{r,i},\, \mathrm{cl}_{q,r,i}\in\{0,1\}$]  
    Binary variables

  \item[$d_{q,i},\,d_{r,i} \, \delta_{q,r,i} \in\mathbb{R}_+$]  
    Auxiliary variables

\end{IEEEdescription}

\subsection*{Parameters and Constants}
\begin{IEEEdescription}[\IEEEsetlabelwidth{$A_{q}\in\mathbb{R}^{4\times4},\,B_{q}\in\mathbb{R}^{4\times2}$}]

\item[$R \in\mathbb{R}_+$]  
    Radius [m] of the control zone.
    
  \item[$x_{q,0}\in\mathbb{R}^4$]  
    Initial state of agent $q$:
    \[
      x_{q,0} = \bigl(p_{q,0}^x,\;p_{q,0}^y,\;v_{q,0}^x,\;v_{q,0}^y\bigr)^\top.
    \]

  \item[$x_{q,f}\in\mathbb{R}^4$]  
    Desired final state of agent $q$:
    \[
      x_{q,f} = \bigl(p_{q,f}^x,\;p_{q,f}^y,\;v_{q,f}^x,\;v_{q,f}^y\bigr)^\top.
    \]

  \item[$A_{q}\in\mathbb{R}^{4\times4},\,B_{q}\in\mathbb{R}^{4\times2}$]  
    State‐transition and control‐input matrices for agent $q$.

  \item[$u_{q,\min},\,u_{q,\max}\in\mathbb{R}^2$]  
    Lower and upper bounds on control inputs for agent $q$

  \item[$x_{q,\min},\,x_{q,\max}\in\mathbb{R}^4$]  
    Lower and upper bounds on the state variables  of agent $q$

  \item[$\alpha,\,\beta\in\mathbb{R}_+^4$]  
    Weight vectors for state deviation.

  \item[$\gamma \in \mathbb{R}_+$]  
    Weight for velocity $V_{q, i}$. 

  \item[$\lambda \in \mathbb{R}_+$]  
    Weight for priority‐reversal $O_{q,r}$.

  \item[$c_x,c_y\in\mathbb{R}$]  
    Coordinates of the intersection center. 

\item[$d_{\mathrm{safe}} \in \mathbb{R}_{+}^{\ast}$]  
    Minimum safety distance [m] between any two agents within the same flow.

\item[$d_{\mathrm{sep}} \in \mathbb{R}_{+}^{\ast}$]  
    Minimum cross-flow separation distance [m] at the intersection.

  \item[$\Lambda_x$, $\Lambda_y$]  
    Poisson arrival rates [s$^{-1}$] for x--flow, y--flow.

  \item[$b\in\mathbb{R}_+$]  
     Threshold to identify possible reversal of agents $q,r$.

  \item[$W,M \in \mathbb{R}_+$]  
    Sufficiently large constant (“Big‐$M$”).

  \item[$\epsilon \in \mathbb{R}_{+}^{\ast}$]  
    Small positive constant. 

  \item[$d_{\mathrm{stop}}\in\mathbb{R}_{+}^{\ast}$]
    Safe braking distance [m] to stop completely (worst-case).
    
  \item[$\delta_{\mathrm{platoon}}\in\mathbb{R}_+$]
    Length of the longest platoon.
  \item[$n_{\mathrm{platoon}}\in\mathbb{R}_+$]
    Largest trailing agent count in either flow.
  \item[$n_{1},n_{2}\in\mathbb{R}_+$]
    Platoon sizes in x/y--flows.
    
\end{IEEEdescription}

\section{Introduction}

\IEEEPARstart{I}{ntersections} represent constrained spaces where two or more distinct flows converge, presenting significant challenges and opportunities in traffic management~\cite{10669156, dresner2008multiagent, nagrare2024intersection, CAVs3407903, sengupta2025urban, khanmohamadi2025smart}. Effective traffic management is important to maximize capacity and throughput, reduce overall delay, ensure safety, and improve fuel efficiency. Researchers have extensively investigated both light-controlled~\cite{kim2023optimal,7349244,little1966synchronization, 11004127} and signal-less traffic intersections~\cite{10833692, 8732975,dresner2004multiagent, carlino2013auction, tlig2012cooperative}. However, as transportation evolves towards greater autonomy, managing intersections without traffic signals becomes increasingly applicable across diverse contexts, including autonomous vehicle fleets at intersecting streets~\cite{9535372, nisyrios2025optimization}, satellites or spacecraft in different orbital planes that may have the same crossing windows~\cite{thangavel2024simulation, sorge2025space}, airport runway scheduling~\cite{hardell2025optimizing, 928721, 5717044, frazzoli2001resolution, 4298903}, and automated warehouse robot sorting goods~\cite{wurman2008coordinating}.

Despite differences in scale and domain, the above-mentioned scenarios share a common structural challenge in coordinating intersecting flows to prevent collisions while optimizing efficiency. A canonical example involves two orthogonal flows crossing a signal-less intersection, which raises key questions about safety guarantees and performance objectives. In particular, the timing of intervention by a centralized controller to manage the intersection has received little attention in the literature. Related work that has a similar theme to the timing of intervention includes Tlig et al.’s synchronization‐based intersection controller \cite{tlig2014decentralized} and Lim et al.’s Extended Arrival Manager for air traffic~\cite{jun2022towards}, although not an intersection controller, shares the key idea of earlier centralized coordination.

Beyond efficiency, effective traffic management must address fairness, essential for operational trustworthiness~\cite{kleinberg1999fairness}, stakeholder endorsement~\cite{marsh1994equity}, and long-term sustainability~\cite{bertsimas2009fairness}. Ignoring fairness risks systemic discrimination and stakeholder discontent~\cite{kumar2006fairness,bertsimas2011price}, particularly under high traffic density conditions. Although intuitive, fairness usually lacks a universal formal definition~\cite{soomer2008fairness} because perceptions of fairness can vary based on context, perspective, or objectives. Fairness metrics~\cite{chin2020tradeoffs} can be categorized as reversal~\cite{chin2022efficient}, overtaking~\cite{bertsimas2016fairness}, and time-ordered deviation~\cite{barnhart2012equitable}. Importantly, when agents cross the intersection in an order different from their arrival order at the control zone, not strictly following the first-in–first-out (FIFO) rule, they undergo a reversal. This can facilitate platooning, i.e., short-headway groups of agents with near-synchronized velocities~\cite{4840432,1470286}. These dense formations cross as units, improving throughput~\cite{hall2005vehicle}, reducing delay~\cite{9246221}, and lowering energy consumption~\cite{10021253}. 

In our prior work~\cite{11108008}, we studied early versus late intervention, modeled as large versus small control zones, at a fixed traffic density and identified a delay-minimizing optimal zone radius. We empirically showed that performance saturates beyond a threshold radius. We also observed platoon formation driven by reversal-enabled sequencing under collision constraints. While our previous study successfully demonstrated these efficiency gains, it did not evaluate the consequential impacts on system fairness.

Building on these findings, the present work develops a systematic study of fairness across intervention strategies {using MILP}. While early intervention offers greater scheduling flexibility that may improve fairness, late intervention can better respect FIFO order, but often at an efficiency cost. We address this gap by elevating fairness to a core objective and examining how fairness constraints interact with control zone sizing across traffic densities. Within our framework, collision-avoidance constraints naturally encourage platooning. Allowing reversals near the intersection can delay one agent slightly to let another pass first, releasing a tightly packed sequence behind it and compressing flows for efficiency gains. Conversely, prioritization of FIFO suppresses platooning, improving fairness but reducing efficiency.
\noindent This paper significantly expands our framework with three key contributions:
\begin{itemize}
\item{We explicitly integrate fairness constraints into the intersection management framework, demonstrating how it reshapes scheduling decisions and reveals equity–efficiency trade-offs.}
\item {We show how control zone size critically influences reversals, platoon formation, and how these factors impact overall system performance.}
\item {We conduct a comprehensive evaluation of delay and energy consumption as performance metrics across varying traffic densities, leading to the identification of optimal control zone radii that balance efficiency and fairness.}
\end{itemize}
These insights help bridge gaps in intersection management research and inform resilient designs for roads, airspace, runways, and logistics. Throughout, “agent” denotes vehicles, UAVs, spacecraft, or robots in cross-flow scenarios.

This paper is organized as follows. Section~\ref{sec:prob_desc} describes the problem setup, focusing on multi-agent behavior at an intersection. Section~\ref{sec:model} details the MILP formulation. Section~\ref{sec:fairness-measure} defines fairness metrics. Section~\ref{sec_simulation} describes the simulation framework. Section~\ref{results} presents results on optimal radii and proof of existence of an optimal radius, fairness trade-offs, and platooning dynamics. Section~\ref{conclusion} concludes and outlines future work.

\section{Problem Description} \label{sec:prob_desc}
This paper addresses the challenge of coordinating autonomous agents from two orthogonal, single-lane flows at a signal-less intersection. Each agent is represented by a point mass, whose dynamics are linear and move from north to south or west to east. To safely and efficiently manage their interaction, we define a circular control zone of radius $R$, centered at the intersection, as depicted in Fig.~\ref{fig:frame_780_1}. Agents operate autonomously outside this zone, typically maintaining a maximum cruising velocity. Upon entering the control zone, their control is transferred to a centralized authority that optimizes their trajectories until they exit. The radius of the control zone $R$ is not explicitly used in the MILP formulation (see Section~\ref{sec:model}). Rather, it specifies when centralized coordination becomes active to manage agents approaching the intersection.\\

\begin{figure}[!ht]
    \centering
    \includegraphics[width=\columnwidth]{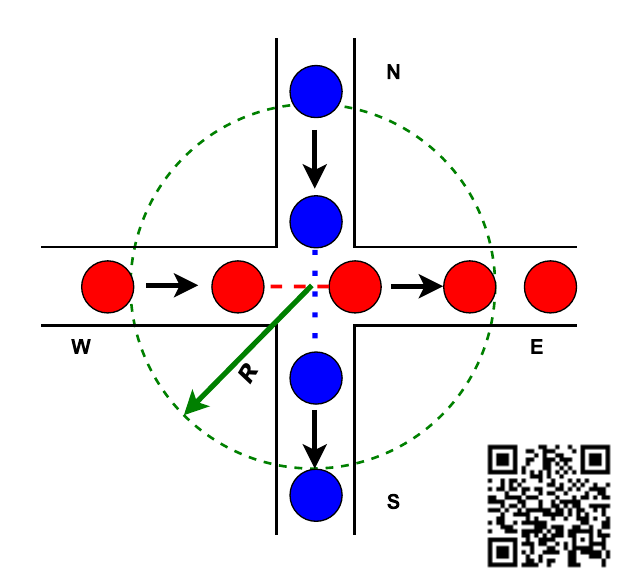}
   \caption{Illustration of the control zone circle (radius \(R\)) in an intersection scenario. The QR code links to a video demonstrating the control zone implementation.}
    \label{fig:frame_780_1}
\end{figure}
\vspace{-10pt} 

The controller’s primary objectives are 
\begin{itemize}
\item \textit{Safety:} Maintain a minimum safety distance between agents of the same flow and a minimum cross-flow separation from agents in the orthogonal flow, avoiding collisions at the intersection.

\item \textit{Efficiency:} Minimize total delay and energy consumption by maximizing the total velocities of agents throughout the control zone. 

\item \textit{Fairness:} Maintain FIFO crossing order as much as possible to avoid disproportionate delays and promote equitable treatment.
\end{itemize}

However, achieving these goals together is important due to inherent trade-offs. In particular, the timing of intervention, determined by the control zone radius $R$, introduces a principal design choice. A larger $R$ enables earlier intervention, giving the controller greater planning flexibility to coordinate agents proactively. Such early intervention can preserve fairness and improve overall throughput. Conversely, a smaller $R$ delays intervention, offering agents more autonomy but limiting the controller’s ability to avoid disorder, such as violations of FIFO order (reversals). This paper emphasizes the importance of determining an optimal control zone radius $R^*$ for balancing objectives of system performance, fairness, and safety.

\section{MODEL FORMULATION} \label{sec:model}

\noindent We formulate the centralized coordination problem as an MILP~\cite{schouwenaars2001mixed,bertsimas2009fairness} that is composed of multiple objective functions $\min\  J = J_{dev}-J_{vel}+J_{rev}$ designed to balance accuracy, throughput, and fairness in managing autonomous agent crossings. Here, $J_{dev}$ penalizes weighted deviations from desired trajectories and terminal states, $J_{vel}$ rewards system throughput (its negative sign ensures that minimizing $J$ is equivalent to maximizing velocity), and $J_{rev}$ encourages fairness by minimizing the total number of priority reversals between agent pairs. This objective is achieved by introducing a binary variable, $O_{q,r}$ for each pair $(q,r)$, which indicates if a reversal has occurred. 

\[
O_{q,r} = 
\begin{cases}
1, & \text{if there is a reversal},\\[4pt]
0, & \text{otherwise }.
\end{cases}
\\ \\
\]

Our centralized control system employs a receding horizon or model predictive control strategy~\cite{garcia1989model, schouwenaars2004receding} where the path of the agents is determined by sequentially solving the MILP for forthcoming time steps, creating finite-horizon solutions that are optimal within each planning window. However, the controller implements only a subset of these computed commands per cycle. Typically, the controller executes only the first set of control inputs, with a fresh set of commands recalculated at each time step~\cite{schouwenaars2001mixed}. The problem may be mathematically stated as

\begin{align*}
\text{minimize} \quad J = \;& 
 \sum_{q=1}^G \left( \sum_{i=1}^{N-1} \alpha^\top \cdot s_{q,i} 
+ \beta^\top \cdot s_{q,N-1} \right) \\
& -  \sum_{q=1}^G \sum_{i=0}^{N-1} \gamma V_{q,i} 
+ \sum_{(q,r)}^G \sum_{i=0}^{N-1} \lambda O_{q,r}
\end{align*}

\quad subject to

\[\forall q \in [1,\ldots, G], \;  \forall i \in [0,\ldots, N-1]:\]
\begin{equation}
\begin{aligned}
x_{q,i+1} &= A_{q} \cdot x_{q,i} + B_{q} \cdot u_{q,i} \\
x_{q,0} &= x_{q,\text{current}} \big(p_{q}^{x},\; p_{q}^{y},\; v_{q}^{x},\; v_{q}^{y}\big)^\top,
\end{aligned}
\label{eq:dynamics_constraints}
\end{equation}


\[\forall q \in [1,\ldots, G], \;  \forall i \in [0,\ldots, N-1]:\]
\begin{equation}
\begin{aligned}
u_{q,i} &\geq u_{q,\text{min}} \\
u_{q,i} &\leq u_{q,\text{max}},
\end{aligned}
\label{eq:Control_constraints}
\end{equation}


\[
\forall q \in [1,\ldots, G],\forall i \in [1,\ldots, N]:
\]
\begin{equation}
\begin{aligned}
x_{q,i} &\geq x_{q,\text{min}} \\
x_{q,i} &\leq x_{q,\text{max}},
\end{aligned}
\label{eq:state_limits}
\end{equation}

\[
\forall q\in [1,\ldots,G],\ \forall i\in [1,\ldots,N]:
\]
\begin{equation}
\begin{aligned}
|p_{q,i}^x - p_{q,f}^x| &\le s_{q,i,1}\\
|p_{q,i}^y - p_{q,f}^y| &\le s_{q,i,2}\\
|v_{q,i}^x - v_{q,f}^x| &\le s_{q,i,3}\\
|v_{q,i}^y - v_{q,f}^y| &\le s_{q,i,4},
\end{aligned}
\label{eq:state_constraints}
\end{equation}

\[
\forall\, q \in [1,\ldots,G],\;\forall\, r \in [q+1,\ldots,G], \forall\, i \in [1,\ldots,N]:
\]
\begin{equation}
\begin{aligned}
\left| p_{q,i}^x - p_{r,i}^x \right| \geq d_{\text{safe}} \\
\quad \left| p_{q,i}^y - p_{r,i}^y \right| \geq d_{\text{safe}},
\end{aligned}
\label{eq:safe_distance}
\end{equation}

\[
\forall q\in [1,\ldots,G],\ \forall r\in [1,\ldots,G],\ \forall i\in [1,\ldots,N]:
\]
\begin{equation}
\begin{aligned}
\bigl|p_{q,i}^x - c_x\bigr| \le d_{q,i}\\
 \bigl|p_{r,i}^y - c_y\bigr| \le d_{r,i}\\
d_{q,i} + d_{r,i} \ge d_{\mathrm{sep}}\\
d_{q,i}\ge 0,d_{r,i}\ge 0,
\end{aligned}
\label{eq:intersection_dist}
\end{equation}

\[
\forall q \in [1,\ldots,G], \;\forall i \in [0,\ldots,N-1]:
\]
\begin{equation}
\begin{aligned}
p_{q,i}^x &\ge c_x - M\,(1 - \varpi_{q,i}) \\
p_{q,i}^x &\le c_x - \epsilon + M\,\varpi_{q,i},
\end{aligned}
\label{eq:crossing_q}
\end{equation}

\[
\forall r \in [1,\ldots,G], \;\forall i \in [0,\ldots,N-1]:
\]
\begin{equation}
\begin{aligned}
p_{r,i}^y &\ge c_y + M\,(1 - \varpi_{r,i}) \\
p_{r,i}^y &\le c_y + \epsilon - M\,\varpi_{r,i},
\end{aligned}
\label{eq:crossing_r}
\end{equation}

\[
\forall q \in [1,\ldots, G], \quad \forall r \in [1,\ldots, G], \quad \forall i \in [0,\ldots, N-1]:
\]
\begin{equation}
\begin{aligned}
\bigl|d_{q,i} - d_{r,i}\bigr| \le \delta_{q,r,i},
\end{aligned}
\label{eq:proximity_calc}
\end{equation}

\[
\forall q \in [1,\ldots, G], \quad \forall r \in [1,\ldots, G], \quad \forall i \in [0,\ldots, N-1]:
\]
\begin{equation}
\begin{aligned}
  \delta_{q,r,i} &\le b + M\,(1 - \mathrm{cl}_{q,r,i}) \\
  \delta_{q,r,i }&\ge b + \epsilon - M\,\mathrm{cl}_{q,r,i},
\end{aligned}
\label{eq:close_proximity}
\end{equation}

\[
\forall (q, r) \mid (r \ne q) \in [1,\ldots, G], \quad \forall i \in [0,\ldots, N-1]:
\]
\begin{equation}
\begin{aligned}
p_{r,i}^y &\ge c_y + \epsilon - M\,\varpi_{q,i} + M\,O_{q,r} + W(1 - \mathrm{cl}_{q,r,i}) \\
p_{q,i}^x &\le c_x - \epsilon + M\,\varpi_{r,i} + M\,O_{q,r} + M(1 - \mathrm{cl}_{q,r,i}).
\end{aligned}
\label{eq:priority_reversal}
\end{equation}

\vspace{1em}

The safe buffer constraint, defined in equation~\eqref{eq:safe_distance}, ensures all agents maintain a minimum distance \(d_{safe}\) at every time step. For each pair \({q},{r}\) on the same flow, it enforces positional separation to be at least \( d_{safe}\) throughout control zone navigation, regardless of which agent is leading or following. The minimum cross-flow separation distance constraint near the intersection, defined in equation~\eqref{eq:intersection_dist}, prevents agents from orthogonal flows (eastbound \({q}\), southbound \({r}\)) from approaching the intersection simultaneously. It requires the sum of their Manhattan distances from \((c_x, c_y)\) to be at least \(d_{sep}\) at each time step during approach or traversal, with the constraint relaxed once both have crossed.

The fairness constraint is designed to minimize reversals and preserve the crossing order for any two distinct agents, $q$ along eastbound and $r$ along southbound at each step $i$ as they approach close to the intersection. Equations~\eqref{eq:crossing_q}–\eqref{eq:priority_reversal} formalize the model with fairness constraints. The variables $\varpi_{q,i}$ and $\varpi_{r,i}$ keep track of whether agents from either flows have crossed the intersection (\({c_{x}}\), \({c_{y}}\)) or not, as shown in equations~\eqref{eq:crossing_q}--\eqref{eq:crossing_r}. The value  $\delta_{q,r,i}$ quantifies, at each step $i$, how close two agents from orthogonal flows are at the intersection. This measure is then used in equation~\eqref{eq:close_proximity} to compare against the specified buffer margin, thereby identifying potential reversal risks when both agents are in proximity ($\mathrm{cl}_{q,r,i}$) at the intersection. {In equation~\eqref{eq:priority_reversal}, the binary variable $O_{q,r}$ indicates whether the controller schedules a reversal between agents $q$ and $r$, taking the value $1$ when their nominal FIFO crossing order is violated and $0$ otherwise.} \\

To ensure safe operation and collision-free trajectories, the control zone must have a sufficient radius. The minimum control zone radius, denoted \(R_{\min}\), defines this feasibility boundary by accounting for braking, separation, and platoon accommodation. This definition applies to a system without fairness constraints. This requirement is formalized in the following proposition.
\begin{Proposition}
Let \(d_{\mathrm{stop}}\) be the worst-case braking distance, \(d_{\mathrm{sep}}\) the minimum cross-flow separation distance, and \(\delta_{\mathrm{platoon}}\) the length of the longest continuous platoon for a given traffic density, $\Lambda$, which has an inter-agent distance as $d_{\mathrm{safe}}$. Then the minimum control zone radius is given by:
\begin{align}
&R_{\min} = d_{\mathrm{stop}} + d_{\mathrm{sep}} + \delta_{\mathrm{platoon}}, \\
&\delta_{\mathrm{platoon}} = (n_{\mathrm{platoon}} - 1)d_{\mathrm{safe}}.
\end{align}
where $n_{\mathrm{platoon}}=\max\{n_1,n_2\}$. $n_1$ and $n_2$ represent platoon sizes in each flow, and the inter-agent distance is $d_{\mathrm{safe}}$. \\
The bound $R_{\min}$ ensures feasibility and guarantees collision-free trajectories. The optimal radius $R^*$ satisfies $R^* \ge R_{\min}$.
\end{Proposition}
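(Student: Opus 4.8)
The plan is to establish the formula through a worst-case geometric feasibility argument, interpreting the radius $R$ as the distance from the control-zone boundary to the intersection center along each flow's direction of motion, since $R$ fixes the point at which centralized control activates rather than entering the MILP directly. I would split the claim into a necessity direction (any feasible, collision-free radius must satisfy $R \geq R_{\min}$) and a sufficiency direction (any $R \geq R_{\min}$ admits a feasible collision-free trajectory), and then deduce $R^* \geq R_{\min}$ as an immediate corollary, since an infeasible radius cannot minimize delay or balance the objectives.

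For necessity, I would construct the critical scenario in which both flows present platoons arriving simultaneously, with the longer platoon containing $n_{\mathrm{platoon}} = \max\{n_1, n_2\}$ agents at inter-agent spacing $d_{\mathrm{safe}}$, so that its spatial length is exactly $\delta_{\mathrm{platoon}} = (n_{\mathrm{platoon}} - 1) d_{\mathrm{safe}}$. Because the cross-flow separation constraint~\eqref{eq:intersection_dist} forbids both flows from occupying the intersection together, one platoon must yield. Its lead agent, entering at the maximum cruising velocity, must decelerate to respect $d_{q,i} + d_{r,i} \geq d_{\mathrm{sep}}$; in the worst case this requires halting at distance $d_{\mathrm{sep}}$ from the center, consuming the braking distance $d_{\mathrm{stop}}$. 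Hence the lead must begin its maneuver no closer than $d_{\mathrm{stop}} + d_{\mathrm{sep}}$ from the center, and for the whole platoon to be under centralized control at that instant its trailing agent---located $\delta_{\mathrm{platoon}}$ behind the lead---must already lie inside the zone. This forces $R \geq d_{\mathrm{stop}} + d_{\mathrm{sep}} + \delta_{\mathrm{platoon}} = R_{\min}$; any smaller radius leaves part of the platoon uncontrolled or the lead with insufficient runway, producing an infeasible instance.

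For sufficiency, I would exhibit an explicit yield-and-go policy and verify it satisfies every constraint of the MILP whenever $R \geq R_{\min}$. With the full platoon inside the zone, the controller lets one flow proceed and commands the other to brake to a stop at separation $d_{\mathrm{sep}}$ using deceleration within the bounds~\eqref{eq:Control_constraints}; because $R \geq R_{\min}$, the available runway $d_{\mathrm{stop}} + \delta_{\mathrm{platoon}}$ suffices for the lead to stop and for the trailing queue to settle at spacing $d_{\mathrm{safe}}$ without violating~\eqref{eq:safe_distance}. Once the first flow clears the intersection, the relaxation of~\eqref{eq:intersection_dist} releases the waiting platoon. I would then confirm that this trajectory honors the linear dynamics~\eqref{eq:dynamics_constraints}, the control and state limits~\eqref{eq:Control_constraints}--\eqref{eq:state_limits}, same-flow safety~\eqref{eq:safe_distance}, and cross-flow separation~\eqref{eq:intersection_dist} at every step, establishing feasibility; the bound $R^* \geq R_{\min}$ then follows because the optimization ranges only over feasible radii.

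The main obstacle I expect is the sufficiency direction: turning the geometric budget $d_{\mathrm{stop}} + d_{\mathrm{sep}} + \delta_{\mathrm{platoon}}$ into a provably feasible discrete-time trajectory that simultaneously respects the coupled safety, separation, and actuation constraints. In particular, I must confirm that the worst-case braking distance $d_{\mathrm{stop}}$ is consistent with the minimum deceleration permitted by $u_{q,\min}$ and the maximum entry velocity (so that $d_{\mathrm{stop}}$ is correctly identified with a quantity of order $v_{\max}^2 / (2|a_{\max}|)$), and that the queue trailing a stopping lead agent can compress to spacing $d_{\mathrm{safe}}$ within the remaining runway with no same-flow collision. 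Making $d_{\mathrm{stop}}$ precise and verifying this compression kinematics are the steps requiring the most care.
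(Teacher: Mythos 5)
Your proposal follows essentially the same approach as the paper: decompose $R_{\min}$ into the three additive budget terms (braking, cross-flow separation, platoon accommodation), argue necessity ($R < R_{\min}$ violates at least one requirement) and sufficiency ($R \ge R_{\min}$ accommodates all three), and conclude $R^* \ge R_{\min}$ because feasibility is a prerequisite for optimality. The paper's own proof is in fact far terser than yours---it asserts these steps in a single short paragraph without the worst-case simultaneous-platoon construction, the explicit yield-and-go policy, or the kinematic verification you outline---so the difficulties you honestly flag at the end (making $d_{\mathrm{stop}}$ precise relative to $u_{q,\min}$ and the entry speed, and verifying that the trailing queue can compress to spacing $d_{\mathrm{safe}}$ without same-flow collisions) are left unaddressed in the paper as well.
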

\begin{proof}
If $R < R_{\min}$, then at least one requirement, stopping distance, cross-flow separation, or platoon accommodation, cannot be satisfied. In this case, collision-free trajectories are not guaranteed. If $R \geq R_{\min}$, the control zone can accommodate all three requirements simultaneously, ensuring safe operation. Since feasibility is a prerequisite for optimality, the optimal radius must satisfy $R^* \geq R_{\min}$.
\end{proof}
\begin{remark}
In the simulation, the control zone radius is not initialized at 
$R_{\min}$, as larger values are typically chosen to capture better and analyze performance trends. Moreover, when fairness constraints are imposed, the minimum radius increases, as additional space is required to accommodate equitable flow scheduling.
\end{remark}
\section{Fairness Measure}\label{sec:fairness-measure}
To assess how fair a centralized controller is, we compare the order in which the agents enter the control zone with the order in which they cross the intersection, as shown in Fig.~\ref{fig:fairness_order} and in Table~\ref{tab:crossing_orders}. If the agent maintains its relative position, i.e., no agent from the orthogonal flow resequenced ahead of it, and the FIFO rule is preserved, then this order is considered fair; otherwise, such a resequencing event is counted as a reversal. The following equations~\eqref{eq:order_shift}--\eqref{eq:zero_sum} provide a concise metric basis for assessing the fairness of the centralized controller through reversal counts and order shifts.\\

\paragraph{Order shift (Table~\ref{tab:crossing_orders_example})}
For agent~$q$,
\begin{equation}
\text{Order Shift} \;=\;
\text{Arrival Order} \;-\;
\text{Crossing Order}
\label{eq:order_shift}
\end{equation}
where \(\text{Arrival Order}\) is the order of agent~\(q\) when it first enters the control zone, and \(\text{Crossing Order}\) is the order of agent~\(q\) when it exits the intersection.

\paragraph{Reversal Indicator}
For agent $q$,
\begin{equation}
\text{Reversal Indicator} \;=\;
\begin{cases}
1, & \text{if } \text{order shift} \neq 0,\\[4pt]
0, & \text{otherwise}.
\end{cases}
\label{eq:reversal_indicator}
\end{equation}
An example is illustrated in Table~\ref{tab:crossing_orders_example}. 

\paragraph{Zero‑sum block}
Because of the reversal, a platoon forms near the intersection. The relative order of the agents is swapped at the cross flows, leading to positive and negative shifts that cancel out in aggregate within a contiguous block, as illustrated in Table~\ref{tab:crossing_orders}. The total order shift is zero, as shown in the following equation:
\begin{equation}
\sum_{q=1}^{G} \text{(Order Shift)}_q \;=\; 0 ,
\label{eq:zero_sum}
\end{equation}
where $G$ is the total number of agents considered.\\

\begin{table}[ht]
\centering
\caption{Agent Arrival order vs.\ Intersection crossing order and flow direction at the intersection.}
\label{tab:crossing_orders}
\begin{tabular}{ccc c}
\toprule
\textbf{Arrival Order} & \textbf{Crossing Order} & \textbf{Flow Direction} & \textbf{Order shift}  \\
\midrule
 1  & 1  & x & 0 \\
 2  & 2  & x & 0 \\
 3  & 3  & x & 0 \\
 4  & 4  & x & 0 \\
 5  & 5  & y & 0 \\
 6  & 7  & y & -1 \\
 7  & 9  & y & -2 \\
 8  & 6  & x & 2 \\
 9  & 8  & x & 1 \\
10  & 10 & x & 0 \\
11  & 11 & x & 0 \\
\bottomrule
\end{tabular}
\end{table}

\begin{table}[h]
\centering
\caption{Illustrative order-shift examples from Table~\ref{tab:crossing_orders}.}
\label{tab:crossing_orders_example}
\begin{tabular}{@{}cccc@{}}
\toprule
Agent & Arrived / Crossed & Order Shift & Reversal \\
\midrule
2 & $2^{\text{nd}}$ / $2^{\text{nd}}$ & $2-2=0$  & 0 \\
6 & $6^{\text{th}}$ / $7^{\text{th}}$ & $6-7=-1$ & 1 \\
8 & $8^{\text{th}}$ / $6^{\text{th}}$ & $8-6=+2$ & 1 \\
\bottomrule
\end{tabular}
\end{table}

As reversals represent pairwise swaps, the shifts would cancel within the block. For the example in Table~\ref{tab:crossing_orders}, the zero-sum block consists of the order shifts from Agents 6, 7, 8, and 9: \((-1) + (-2) + (+2) + (+1) = 0\). This block is defined by a change in flow direction, starting after Agent 5’s \(y\)-flow and ending before Agent 10’s \(x\)-flow. This swap pattern causes the zero-sum block to align with the same flow direction as its neighboring agents, as shown in Table~\ref{tab:crossing_orders}. If the aligned flow directions are the same, the platoon counts are updated as \( \text{platoons\_y} = 1 \) and \( \text{platoons\_x} = 1 \). We track only the formation event, not the platoon size.   

\begin{figure}[!ht]
    \centering
    \includegraphics[width=\columnwidth]{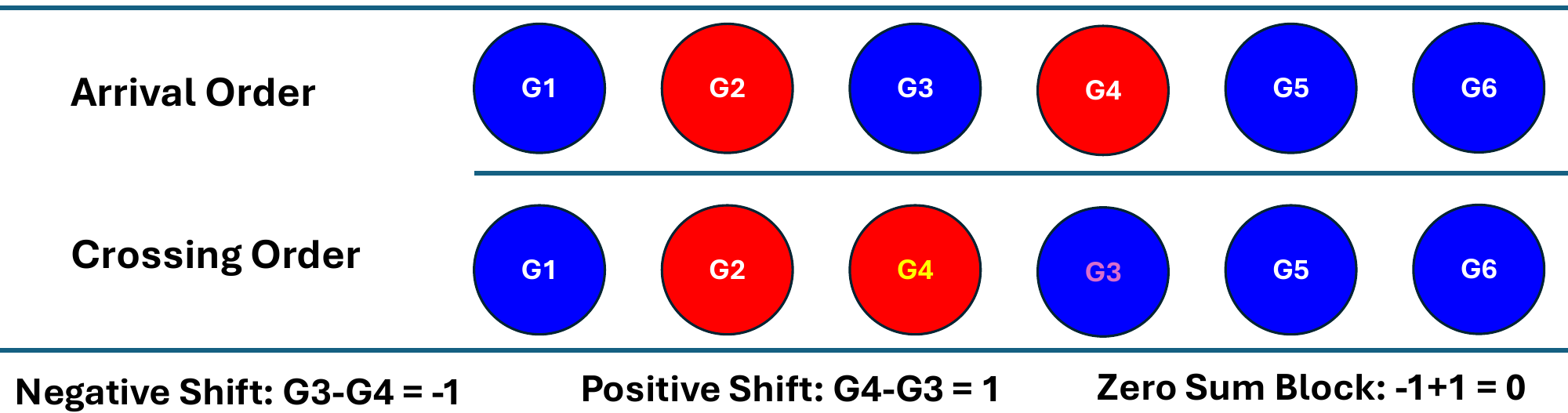}
    \captionsetup{skip=6pt} 
    \caption{Illustration of shifts or swapping in intersection crossing order.}
    \label{fig:fairness_order}
\end{figure}
\vspace{-14pt} 

\subsection{Why Reversal Happens.}
The simulation scenario has two orthogonal flows where agents enter the control zone at the maximum velocity in their arrival order. Reversals occur when agents from orthogonal flows reach the intersection nearly simultaneously, and the centralized controller must prioritize one agent to maintain a minimum cross-flow separation distance, $d_{\mathrm{sep}}$, forcing the other to wait. This reordering changes the agent’s crossing orders relative to arrival orders equation~\eqref{eq:order_shift}, which is then captured by reversal indicators equation~\eqref{eq:reversal_indicator}. Reversals introduce localized delays, which propagate backward along the flow, causing trailing agents to cluster into platoons. Consequently, reversals act as a natural mechanism for platoon formation.

While reversals undermine fairness by disrupting the nominal crossing orders, they may enhance efficiency. They produce denser platoons with reduced inter-agent spacing, thereby increasing overall throughput. Hence, an inherent trade-off emerges where minimizing reversals improves fairness but limits the throughput gains from platoon formation.

\section{Simulation Framework}\label{sec_simulation}

We consider two orthogonal, single-lane flows that intersect at the origin. The north--south flow runs from $y=+240~\mathrm{m}$ to $y=-240~\mathrm{m}$, and the west--east flow runs from $x=-240~\mathrm{m}$ to $x=+240~\mathrm{m}$. A circular control zone of radius $R$ is centered at the meeting point $(0,0)$ of two flows. The simulation is divided into two phases: the first phase is free-flow, where agents outside the control zone travel at constant speed along their lane, and their positions are updated by forward-Euler discretization of the kinematics. In the second phase, the agents, once inside the control zone, are coordinated by the centralized controller MILP in Section \ref{sec:model}.

Agents on each flow arrive according to independent Poisson processes with rates $\Lambda_x$ (x-flow) and $\Lambda_y$ (y-flow), specified in Table~\ref{tab:sim_settings}. Equivalently, the continuous inter-arrival times are exponential,

\[ \Delta t_j \sim \mathrm{Exp}(\Lambda_\ell),\quad \ell\in\{x,y\}, \]

\noindent which we discretize to simulation steps of size $\Delta t$ by $k_j=\lceil \Delta t_j/\Delta t \rceil$. To prevent unrealistically short headways, we enforce a minimum interarrival spacing of $k_{\min}$ steps (e.g., $k_{\min}=2$ steps corresponds to $0.2~\mathrm{s}$\,) between successive arrivals on each flow, updating $k_j \leftarrow \max(k_j,\,k_{\min})$. With a fixed random seed ($s=20$), the arrival schedule is reproducible within the Python environment. 

\begin{table}[!t]
  \centering
  \caption{Simulation parameters and solver settings.}
  \label{tab:sim_settings}
  \resizebox{\columnwidth}{!}{%
  \begin{threeparttable}
  \begin{tabular}{@{}ll@{}}
    \toprule
    \textbf{Geometry \& traffic} & \\
    Control zone radius $R$ & $[40,60,80,100, 120,140,160,180,200]$ m \\
    Safe distance $d_{\text{safe}}$ & $3$ m \\
    Separation distance $d_{\text{sep}}$ & $4$ m  \\
    Start/End (N–S lane) & $+240$ m $\rightarrow$ $-240$ m; y-flow \\
    Start/End (W–E lane) & $-240$ m $\rightarrow$ $+240$ m; x-flow \\
    Arrival rates $(\Lambda_x,\Lambda_y)$ & $[0.5,0.7,0.9]$ s$^{-1}$ each; $k_{\min}$ $=0.2$ s (2 steps) \\
    \midrule
    \textbf{Discretization \& horizon} & \\
    Time step $\Delta t$ & $0.1$ s \\
    Horizon length $N$ & $60$ steps \\
    Simulation duration $T$ & $3600$ s \\
    \midrule
    \textbf{Bounds} & \\
    Input bounds $u_{q,i}$ & $[-5,\,5]$ m/s$^2$ (per flow) \\
    State bounds $x_{q,i}$ &
      $p^x\in[-R,\,R]$ m,\ $p^y\in[-R,\,R]$ m; \\
      & $v^x,v^y\in[-15,\,15]$ m/s \\
    \midrule
    \textbf{Objective weights} & \\
    State slacks $\alpha$ & $[1,1,1,1]$ \\
    Terminal slacks $\beta$ & $[100,100,100,100]$ \\
    Speed reward $\gamma$ & $1$ \\
    Reversal penalty $\lambda$ & $1$ \\
    \midrule
    \textbf{MIP constants} & \\
    Big-$M$, Big-$W$ & $10^3,\ 10^6$ \\
    Proximity buffer $b$ & $4$ m \\
    Strictness $\varepsilon$ & $0.1$ m \\
    \midrule
    \textbf{Solver/Hardware} & \\
    Solver & Gurobi Optimizer (Python API) \\
    CPU & Intel(R) Xeon(R) Gold 6248 CPU 2.50GHz \\
    \bottomrule
  \end{tabular}
  \end{threeparttable}
  }
\end{table}

The centralized controller activates when the first agent enters the control zone from either flow. At each simulation step, it identifies all agents currently within $[-R,R]\times[-R,R]$, creates their decision variables and constraints, and solves a receding-horizon MILP of length $N$ steps (corresponding to a horizon duration of $N$×$\Delta t$ seconds) (see Section~\ref{sec:model}). Only the first set of control inputs is executed,  with a fresh set of commands recalculated at each time step. Agents that exit the zone are removed from the MILP and resume free-flow kinematics. The complete trajectory, a log of time-stamped positions, and optimized decisions of each agent are stored in a JSON file, which is used for post-processing and result generation. This simulation study serves as a rigorous platform to validate the proposed methodology’s applicability, robustness, and potential.

The simulation recording and MILP code are available in \href{https://github.com/salmansarfarazghori/Agent_control_region_circle/blob/main/README.md}{GitHub Repository README} or \href{https://www.youtube.com/watch?v=mqgiasBgRtE}{video link}.

\section{Results} \label{results}
\subsection{System Performance without Fairness Constraints}\label{subsec:sys_perf_false}
To evaluate the performance of the control zone, we use total delay and total energy as metrics. The delay is measured as the actual time it takes in the simulation for an agent to cross the entire control zone of radius $R$ against the ideal time the agent takes to travel with maximum velocity to a given radius. It is defined as

\begin{equation}\label{eq:delay_main}
\mathrm{delay}^{(q)}(R) = t^{(q)}_{\mathrm{sim}}(R) - t^{(q)}_{\mathrm{ideal}}(R).
\end{equation}

\noindent The energy consumption is measured as the cumulative change in velocity of the agent, throughout its journey inside $R$. It is defined as

\begin{equation}
E_{q} \;=\; 
\sum_{} 
\bigl|\,v_{q,i+1} \;-\; v_{q,i}\bigr|.
\label{eq:proxy_energy}
\end{equation}

\noindent The definition in equations~\eqref{eq:delay_main}--\eqref{eq:proxy_energy} is computed over the full entry–exit path of length $2R$, since agents travel at $V_{\max}$ outside the control zone. This ensures that any observed delay is solely due to the controller’s actions, enabling unbiased comparisons across different values of $R$. 

The study of system performance under different operational strategies, with and without fairness constraints, is illustrated in Figs.~\ref{fig:delay_3} and~\ref{fig:delay_7}. The Tables~\ref{tab:total_delay_full_common}--\ref{tab:total_delay_full_common_fair} summarize the total delays and energy consumption across varying traffic densities. These results provide a detailed look at the performance trends when fairness is inactive. This case will be discussed in the following section.

Fig.~\ref{fig:delay_3} represents the normalized total delay calculated using equation~\eqref{eq:delay_main} across various control zone radii for the different traffic densities ($\Lambda = 0.9, 0.7, 0.5~\mathrm{s^{-1}}$). The normalization employs a min-max approach, highlighting the relative variation and trend rather than absolute magnitudes. Across all densities, the curves exhibit a common pattern where total delay decreases sharply as the control radius increases from $40~\mathrm{m}$ up to $90~\mathrm{m}$ within which the optimal point is attained. The radius at which the delay reaches its minimum is referred to as the optimal point, since it delivers the best system performance in terms of delay reduction. This behavior indicates that moderate enlargement of the control zone improves the system performance. Beyond this optimal point, the gains in total delay saturate and even slightly increase for larger control radii.

The analysis across traffic densities reveals that the optimal control zone radius shifts to larger values as density increases, as summarized in Table~\ref{tab:total_delay_full_common}. Specifically, for lower densities ($\Lambda = 0.5~\mathrm{s^{-1}}$), the optimal radius is around $60$~m, it shifts to near $80~\mathrm{m}$ for moderate density ($\Lambda = 0.7~\mathrm{s^{-1}}$) and reaches around $90~\mathrm{m}$ for high density ($\Lambda = 0.9~\mathrm{s^{-1}}$). This progressive shift highlights that denser traffic requires a larger spatial area for efficient centralized management and improves system performance.\\

\begin{figure}[!ht]
    \centering
    \includegraphics[width=\columnwidth]{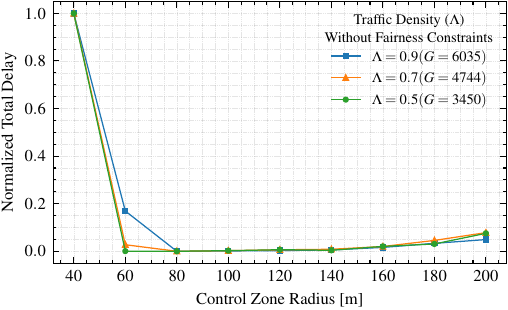}
    \captionsetup{skip=3pt}
   \caption{Total delay of all the agents evaluated over the entire control zone path length ($2R$) around the intersection for different traffic densities for the system without fairness constraint vs varying control zone radius.}
    \label{fig:delay_3}
\end{figure}

\begin{table}[t]
  \centering
  \caption{Total delay [s] by control zone radius and traffic density, for the system without fairness constraints.}
  \label{tab:total_delay_full_common}
  \begin{tabular}{@{}cccc@{}}
    \toprule
    \textbf{Radius $R$ [m]} & $\boldsymbol{\Lambda=0.9~\mathrm{s^{-1}}}$ & $\boldsymbol{\Lambda=0.7~\mathrm{s^{-1}}}$ & $\boldsymbol{\Lambda=0.5~\mathrm{s^{-1}}}$ \\
    \midrule
    40  & 340.742209 & 181.804367 &  81.682809 \\
    60  & 311.017231 & 164.473746 &  74.407344 \\
    80  & 304.958823 & 163.983347 &  74.402705 \\
    100 & 305.005939 & 164.027679 &  74.421047 \\
    120 & 305.059639 & 164.094644 &  74.450681 \\
    140 & 305.165351 & 164.133742 &  74.433219 \\
    160 & 305.540043 & 164.339765 &  74.552552 \\
    180 & 306.150742 & 164.796784 &  74.626199 \\
    200 & 306.718879 & 165.377634 &  74.951762 \\
    \bottomrule
  \end{tabular}
\end{table}

Fig.~\ref{fig:delay_5} represents the normalized total energy consumption, calculated using equation~\eqref{eq:proxy_energy} across different control zone radii for traffic densities ($\Lambda = 0.9, 0.7, 0.5~\mathrm{s^{-1}}$). The total energy consumption decreases sharply as the radius increases from $R=40~\mathrm{m}$ to around $R=90~\mathrm{m}$, within which the optimal point is attained. This reduction reflects smoother velocity adjustments enabled by earlier intervention, represented by a larger control zone. Beyond this optimal point, the improvements in energy consumption saturate and eventually begin to increase slightly. This saturation pattern is consistent with the delay results in Fig.~\ref{fig:delay_3}, indicating the presence of an optimal control radius for each traffic density. Beyond this point, enlarging the control zone provides no further benefit. The shift of the optimal radius toward larger values with increasing traffic density is summarized in Table~\ref{tab:total_energy_full_common}.

These findings collectively indicate the existence of an optimal control zone radius, highlighting the need to adapt the control zone radius $R$ to traffic density to maintain efficient operation. Simulation results show that as $R$ increases from a small value, system performance improves. However, these gains gradually diminish and eventually saturate. This behavior suggests that beyond a certain point, enlarging $R$ provides no further benefit.

\begin{remark}
    The existence of an optimal radius was first reported empirically in our earlier work~\cite[Hypothesis~1]{11108008} for single traffic density. The consistent saturation pattern observed across multiple densities in this study strengthens that observation and motivates its formal statement as a hypothesis. Intuitively, once $R$ is large enough to enable all feasible schedules for the given density, additional space only increases the number of agents being managed simultaneously, without improving performance. A rigorous proof is left for future work, as enlarging $R$ changes both the set of feasible trajectories and the coupling between agents, which may change the optimal solution in nontrivial ways.
\end{remark}

\begin{figure}[!ht]
    \centering
    \includegraphics[width=\columnwidth]{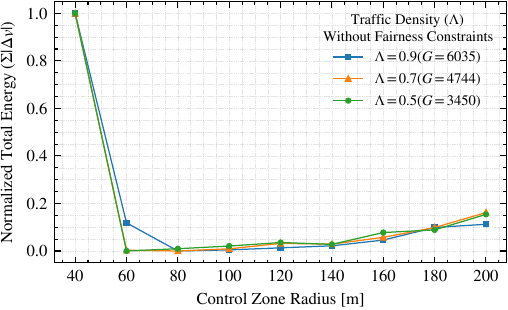}
    \captionsetup{skip=3pt}
   \caption{Total energy of all the agents evaluated over the entire control zone path length ($2R$) around the intersection for different traffic densities for the system without fairness constraint vs varying control zone radius.}
    \label{fig:delay_5}
\end{figure}

\begin{table}[t]
  \centering
  \caption{Total consumed energy (\( \sum \lvert \Delta v \rvert \)) by control zone radius and traffic density, for the system without fairness constraints.}

  \label{tab:total_energy_full_common}
  \begin{tabular}{@{}cccc@{}}
    \toprule
    \textbf{Radius $R$ [m]} & $\boldsymbol{\Lambda=0.9~\mathrm{s^{-1}}}$ & $\boldsymbol{\Lambda=0.7~\mathrm{s^{-1}}}$ & $\boldsymbol{\Lambda=0.5~\mathrm{s^{-1}}}$ \\
    \midrule
    40  & 11742.805860 & 6616.046832 & 3155.093870 \\
    60  & 10671.814390 & 5930.174859 & 2877.317466 \\
    80  & 10528.061849 & 5928.066784 & 2879.969256 \\
    100  & 10533.671732 & 5934.325602 & 2883.067891 \\
    120  & 10544.047833 & 5950.032519 & 2887.226340 \\
    140  & 10554.425725 & 5948.001537 & 2885.028172 \\
    160  & 10583.870933 & 5967.436205 & 2898.879083 \\
    180  & 10648.101256 & 5995.982287 & 2901.951909 \\
    200  & 10664.629999 & 6039.258762 & 2920.092437 \\
    \bottomrule
  \end{tabular}
\end{table}

\subsection{System Performance with Fairness Constraints}

Without fairness constraints, the two performance metrics, total delay and total energy consumption, both supported the existence of an optimal control zone radius (see Figs.~\ref{fig:delay_3} and~\ref{fig:delay_5}). In contrast, enabling the fairness constraint modifies system dynamics very much. As shown in Figs.~\ref{fig:delay_6} and~\ref{fig:delay_7}, delay and energy consumption decrease when the radius increases till $80~\mathrm{m}$ or $100~\mathrm{m}$, and beyond the $100~\mathrm{m}$ performance metrics worsen substantially with larger radii, suggesting that the fairness constraint significantly degrades agent coordination. This trend occurs at all traffic densities ($\Lambda = 0.9, 0.7, 0.5~\mathrm{s^{-1}}$), where the fairness constraint’s implementation leads to high coordination complexities as the radius increases to larger values.

\begin{table}[t]
  \centering
  \caption{Total delay[s] by control zone radius and traffic density, for the system with fairness constraints.}
  \label{tab:total_delay_full_common_fair}
  \begin{tabular}{@{}cccc@{}}
    \toprule
    \textbf{Radius $R$ [m]} & $\boldsymbol{\Lambda=0.9~\mathrm{s^{-1}}}$ & $\boldsymbol{\Lambda=0.7~\mathrm{s^{-1}}}$ & $\boldsymbol{\Lambda=0.5~\mathrm{s^{-1}}}$ \\
    \midrule
    40  & \textemdash & \textemdash & \textemdash \\
    60  & 331.610227 & 175.109843 & 77.822457 \\
    80  & 328.477115 & 174.979778 & 77.818686 \\
    100 & 328.567348 & 175.046228 & 77.818677 \\
    120 & 329.092215 & 175.197285 & 77.853453 \\
    140 & 329.716541 & 175.601019 & 78.040673 \\
    160 & 330.347674 & 175.996415 & 78.173349 \\
    180 & 330.634515 & 176.272579 & 78.274321 \\
    200 & 330.790821 & 176.401174 & 78.388164 \\
    \bottomrule
  \end{tabular}
\end{table}

\begin{figure}[!ht]
    \centering
    \includegraphics[width=\columnwidth]{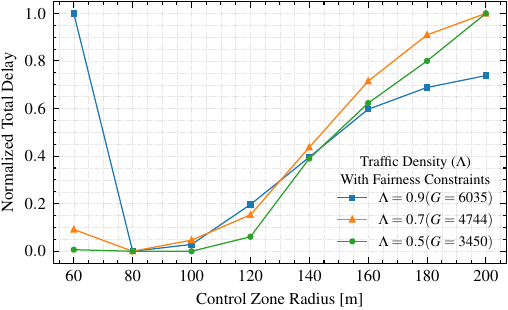}
    \captionsetup{skip=3pt}
   \caption{Total delay of all the agents evaluated over the entire control zone path length ($2R$) around the intersection for different traffic densities for the system with fairness constraint vs varying control zone radius.}
    \label{fig:delay_6}
\end{figure}

\begin{figure}[!ht]
    \centering
    \includegraphics[width=\columnwidth]{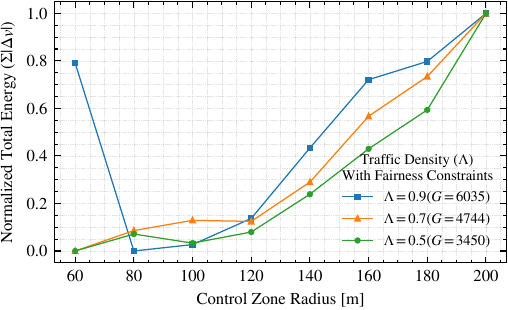}
    \captionsetup{skip=3pt}
   \caption{Total Energy consumption of all the agents evaluated over the entire control zone path length ($2R$) around the intersection for different traffic densities for the system with fairness constraint vs varying control zone radius.}
    \label{fig:delay_7}
\end{figure}

\begin{remark}
    While fairness constraints support equitable treatment among agents crossing order, they impose significant performance penalties. Therefore, determining an optimal control zone radius becomes vital for balancing fairness with overall system performance. This understanding is important for centralized traffic management systems and emphasizes the importance of considering fairness explicitly in the design and implementation of coordination algorithms for autonomous agent management.
\end{remark}

\subsection{MILP Optimality}
The observed increase in total delay and energy consumption in Section~\ref{subsec:sys_perf_false} for control zone radii beyond $R=140~\mathrm{m}$, which occurs well after the optimal zone (around $R=80-90~\mathrm{m}$) and the subsequent saturation plateau as shown in Figs.~\ref {fig:delay_3} and~\ref{fig:delay_5}. This raises the possibility that increasing computational complexity, reflected in solver performance, may be a contributing factor. To investigate this, we examine the final Mixed Integer Programming (MIP) gap given by the Gurobi solver~\cite{gurobi} for every MPC function call. The results are visualized as violin density plots in Figs.~\ref{fig:MIP_GAP_WITHOUTFAIR} and~\ref{fig:MIP_GAP_WITHFAIR}. In these violins, the width encodes the observed density of MIP gaps at a given radius, and wider sections indicate more runs accumulated near that value.

Across both cases, whether fairness constraints are imposed or not, the solver performance remains stable. While there is a slight broadening of the tails as $R$ grows, the solver overall maintains tight MIP gaps. The final MIP gap, which represents the relative difference between the best-found solution and the theoretical optimal solution, never exceeds $0.011\%$,  indicating that the solution found is near-optimal regardless of the control zone radius. In MILP practice, it is within numerical optimality tolerance. This observation clarifies that the increase in total delay and energy at larger radii is not caused by numerical difficulties or solver instability.

Instead, the performance degradation at large $R$ originates from the coordination dynamics of the problem itself. Enlarging the control zone necessitates the simultaneous scheduling of more agents, while satisfying operational and fairness constraints, which substantially narrows the feasible solution space, even when MILP solutions are near-optimal. The consistently low MIP gaps in Figs.~\ref{fig:MIP_GAP_WITHOUTFAIR} and~\ref{fig:MIP_GAP_WITHFAIR} confirm that the observed results arise from fundamental coordination limitations rather than computational factors.

\begin{figure}[!ht]
    \centering
    \begin{overpic}[width=\columnwidth]{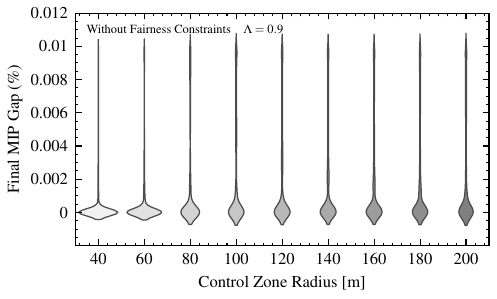}
    \end{overpic} 
    \caption{The kernel-density distributions of the final MIP gaps at each control zone radius $R$ without fairness constraints.}
    \label{fig:MIP_GAP_WITHOUTFAIR}
\end{figure}
\begin{figure}[!ht]
    \centering
    \begin{overpic}[width=\columnwidth]{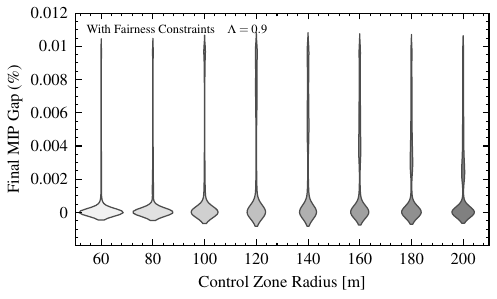}
    \end{overpic} 
    \caption{The kernel-density distributions of the final MIP gaps at each control zone radius $R$ with fairness constraints.}
    \label{fig:MIP_GAP_WITHFAIR}
\end{figure}
\subsection{Reversals, Platooning, and System efficiency}
The analysis presented in Table~\ref{tab:fairness-comparison} reports the impacts of fairness constraints inactive versus active in the centralized controller (see ~\ref{sec:model} ) across control zone radii for traffic density of $\Lambda=0.9~\mathrm{s}^{-1}$. Notably, the table highlights the trade-offs between maintaining fairness, which is defined through minimizing crossing order reversals, and the potential efficiency gains achieved through platoon formation at intersection crossings by varying the control zone radius.

In the system without fairness constraints, there are significant crossing order deviations as shown in Table~\ref{tab:fairness-comparison} (left block), which are measured using the ‘Sum’ metric, a sum of absolute order shifts equation~\eqref{eq:order_shift}, and by reversal counts equation~\eqref{eq:reversal_indicator} that occur across all control zone radii. For example, at $ R=40~\mathrm {m}$ for a system without fairness constraints, register 616 reversals, showing a considerable deviation from the FIFO crossing order rule. Interestingly, these reversals assist in forming platoons calculated by equation~\eqref{eq:zero_sum}, which shows a count of $182$ and $162$ platoons for X and Y flows (see Section~\ref{sec:fairness-measure}), respectively. This platooning naturally reduces spacing gaps between successive agents, thereby improving overall throughput and operational efficiency. As the control zone radius increases from $R=40~\mathrm{m}$ to $R=80~\mathrm{m}$, reversal counts decrease while platoon counts rise, indicating improved fairness and operational efficiency without suppressing beneficial agent grouping.

\noindent When compared to $R=80~\mathrm{m}$, reversal counts show a slight uptick for $R=100$ to $160~\mathrm{m}$, but remain roughly the same across this range, and the platoon count also remains nearly the same, indicating that near-optimal platoon structures are sustained. Beyond $R=180~\mathrm{m}$, both reversals and platoon counts increase, signaling rising coordination overhead and a resulting loss of efficiency. These trends align with Fig.~\ref{fig:delay_3} (blue curve $\Lambda=0.9~\mathrm{s}^{-1}$) where lower total delay for $R=40$ to $80~\mathrm{m}$ (fewer reversals, stronger platooning), delay saturation for $R=100$ to $160~\mathrm{m}$, and increased delay beyond $R=180~\mathrm{m}$ due to higher coordination complexity.

\definecolor{plotblue}{HTML}{1f77b4}  
\definecolor{plotorange}{HTML}{ff7f0e} 

\begin{table*}[ht]
  \centering
  \caption{Effect of fairness due to crossing-order deviations and local platooning across control zone radii.}
  \label{tab:fairness-comparison}
  \sffamily
  \newcommand{\fcol}{\textcolor{plotblue}}   
  \newcommand{\tcol}{\textcolor{plotorange}} 
  \begin{tabular}{@{}c
                  cccc     
                  r@{\hspace{1.5em}} 
                  cccc     
                 @{}}
    \toprule
    & \multicolumn{4}{c}{\bfseries Without Fairness Constraints} &
      & \multicolumn{4}{c}{\bfseries With Fairness Constraints} \\
    \cmidrule{2-5} \cmidrule{7-10}
    \textbf{Radius [m]} &
    \fcol{Sum $\Delta$} & \fcol{Reversals} & \fcol{Plat.\,X} & \fcol{Plat.\,Y} &
    & \tcol{Sum $\Delta$} & \tcol{Reversals} & \tcol{Plat.\,X} & \tcol{Plat.\,Y} \\
    \midrule
     40  & 700 & 616 & 182 & 162 & & –   & –   & –   & –  \\
     60  & 682 & 588 & 181 & 176 & & 300 & 262 & 106 & 78 \\
     80  & 668 & 572 & 184 & 183 & & 310 & 261 & 108 & 75 \\
    100  & 676 & 577 & 185 & 183 & & 310 & 261 & 108 & 75 \\
    120  & 676 & 579 & 184 & 183 & & 310 & 261 & 108 & 75 \\
    140  & 672 & 575 & 187 & 184 & & 304 & 258 & 108 & 75 \\
    160  & 672 & 578 & 188 & 183 & & 302 & 257 & 108 & 74 \\
    180  & 690 & 598 & 192 & 186 & & 302 & 256 & 107 & 74 \\
    200  & 714 & 615 & 198 & 185 & & 300 & 255 & 107 & 74 \\
    \bottomrule
  \end{tabular}
\end{table*}

\begin{figure}[!ht]
    \centering
    \begin{overpic}[width=\columnwidth]{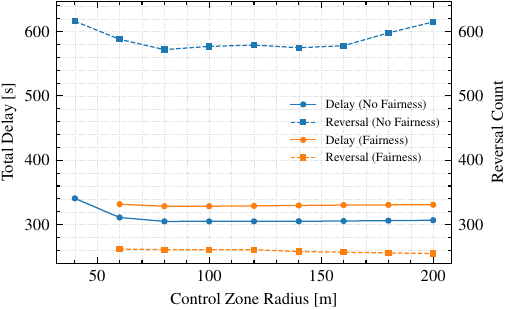}
    \end{overpic} 
    \caption{Total delay and reversal counts for traffic density $\Lambda=0.9~\mathrm{s^{-1}}$ for varying control zone radius with and without Fairness constraint.}
    \label{fig:delay_reversal_false_combine}
\end{figure}

Activating the fairness constraint markedly reduces the order deviation, see Table~\ref{tab:fairness-comparison} (right block), but raises the delay compared to the system without fairness constraints, as shown in Fig.~\ref{fig:delay_reversal_false_combine}. Reversals drop by roughly one-half for the control zone from $R=60~\mathrm{m}$ to $R=200~\mathrm{m}$ compared to the system without fairness constraints. The results demonstrate that fairness constraints work very well in reducing the reversals and making the system fairer. Minimum control zone for the system without fairness is around $R=40~\mathrm{m}$, but now, when the fairness constraint is active, the minimum $R$ is shifted to around $R=60~\mathrm{m}$ as respecting the FIFO crossing order rule needs more space to avoid overly restrictive schedules. Beyond $R=120~\mathrm{m}$, reversal counts decrease slightly while total delay increases (Fig.~\ref{fig:delay_6}, blue curve, $\Lambda=0.9~\mathrm{s}^{-1}$). This occurs because prioritizing order preservation (fewer reversals) together with scheduling a larger number of agents in the expanded control zone tightens feasibility and raises coordination overhead, thereby increasing delay.

Across all densities, the total delay of the fairness active controller is higher when compared to the fairness inactive controller, and this is expected as fewer reversal means fewer platooning formations, which means less improvement in system efficiency as shown in Fig.~\ref{fig:delay_reversal_false_combine}. So there is a trade-off between the controller being fairer and system performance as illustrated by Tables~\ref{tab:total_delay_full_common}-~\ref{tab:total_delay_full_common_fair}. 

\subsection{Effect of Control Zone Radius on Velocity Profiles and System Behavior}

The choice of control zone radius is a key factor for effective planning. A small control zone (e.g., $R=40~\mathrm{m}$ to $R=60~\mathrm{m}$) provides limited space for planning, abrupt velocity changes (upper panel in Fig.~\ref{fig:Velocity_1}) leading to inefficient transient behaviors and higher delay and greater energy consumption. In contrast, a moderately sized control zone enables smooth coordination, resulting in steady-state velocities (lower panel in Fig.~\ref{fig:Velocity_1}, and both panels in Fig.~\ref{fig:Velocity_2}) and beneficial platoon formation. A very large zone (beyond $R=140~\mathrm{m}$) introduces coordination overhead by managing more agents, resulting in performance saturation despite near-optimal solving. Selecting an appropriate control zone radius improves overall system efficiency.

\begin{figure*}[h!]
    \centering
    \begin{overpic}[width=\textwidth]{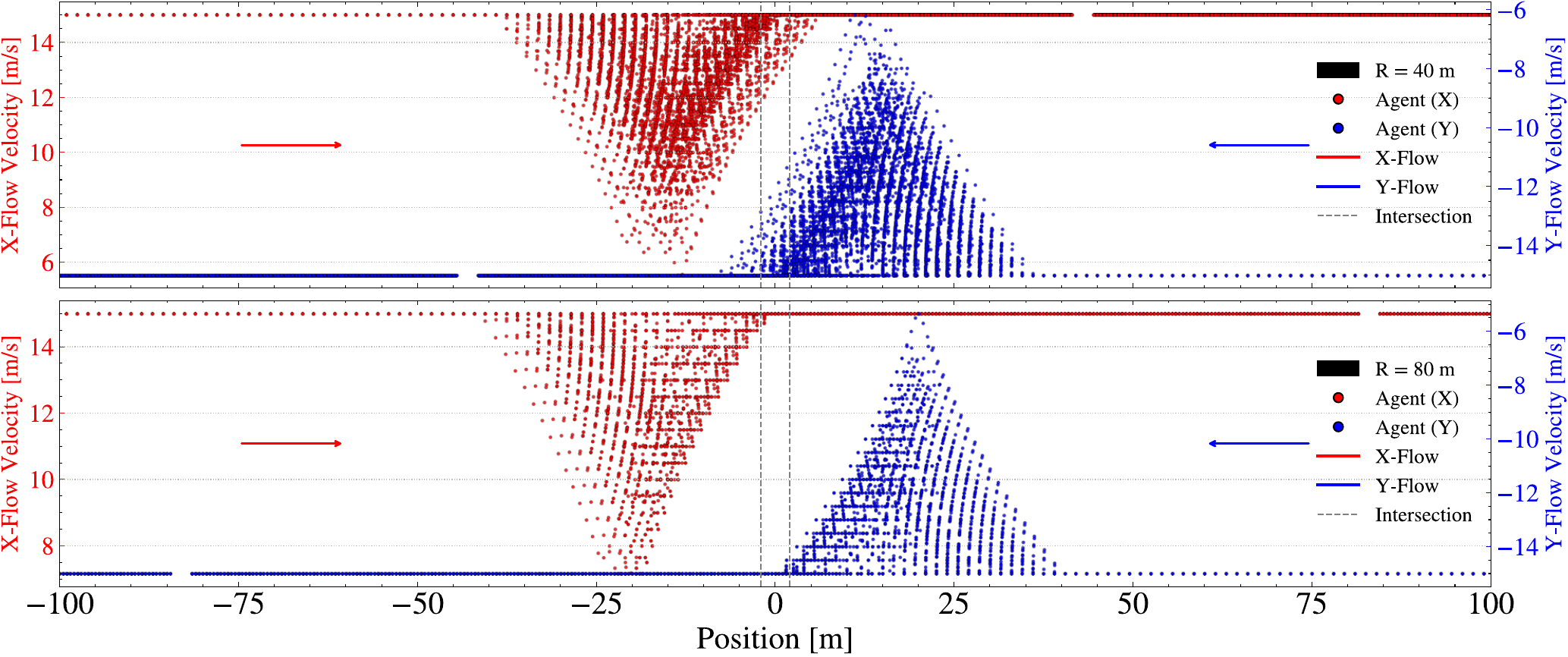}
    \end{overpic} 
    \captionsetup{skip=3pt}
    \caption{Velocity profile for control zone radius of $R=40~\mathrm{m}$ (top plot) and $R=80~\mathrm{m}$ (bottom plot) for traffic density $\Lambda=0.9~\mathrm{s^{-1}}$, for the system without fairness constraint. Red and blue dots denote agents traveling along the x- and y-flows.}
    \label{fig:Velocity_1}
\end{figure*}

\begin{figure*}[h!]
    \centering
    \begin{overpic}[width=\textwidth]{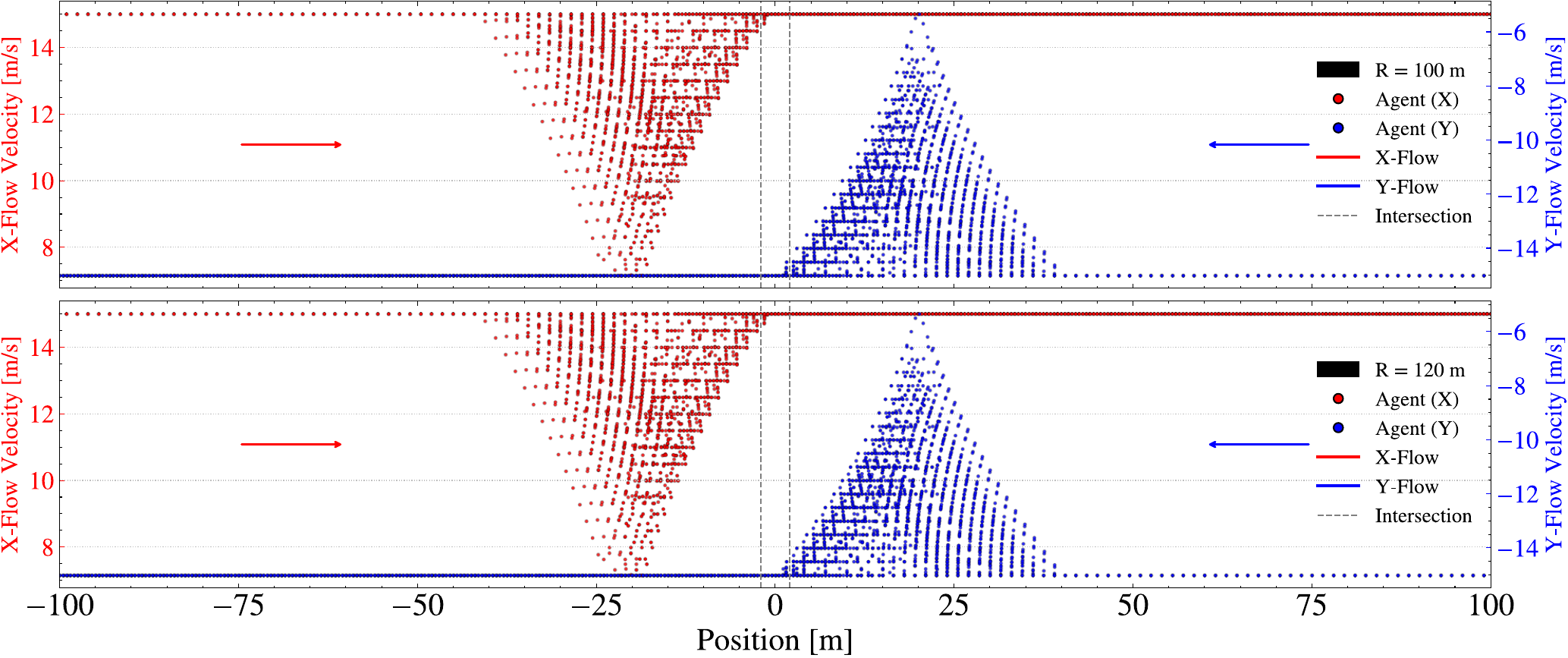}
    \end{overpic} 
    \captionsetup{skip=3pt}
    \caption{Velocity profile for control zone radius of $R=100~\mathrm{m}$ (top plot) and $R=120~\mathrm{m}$ (bottom plot) for traffic density $\Lambda=0.9~\mathrm{s^{-1}}$, for the system without fairness constraint. Red and blue dots denote agents traveling along the x- and y-flows.}
    \label{fig:Velocity_2}
\end{figure*}

\subsection{Platooning Efficiency}
Figs.~\ref{fig:ST_40} and~\ref{fig:ST_80} show the space-time diagrams for the control zones of $R=40~\mathrm{m}$ and $R=80~\mathrm{m}$ for traffic density $\Lambda=0.9~\mathrm{s}^{-1}$. Increasing the control zone gives the controller more time to anticipate conflicts by forming the optimal platoons, reducing the reversal count, and increasing the platoon size.

Table~\ref{tab:R40_A_fourcols} reports $9$ reversals for the configuration highlighted by the black square in Fig.~\ref{fig:ST_40}. The corresponding space–time diagram shows a suboptimal platoon formation near the intersection (grey dots), where two red agents, followed by three blue agents, then three red agents, and finally one blue agent. This fragmented sequencing leads to multiple small platoons and a higher reversal count.  

When the control zone is enlarged to $R=80~\mathrm{m}$, the highlighted region in Fig.~\ref{fig:ST_80} demonstrates a more coherent platoon structure. Here, three red agents, four blue agents, and two red agents are grouped into larger, more organized platoons. The number of reversals decreases to $7$, as reported in Table~\ref{tab:R80_A_fourcols}.

The results show that increasing the control zone radius until it reaches the optimal radius improves the flow, leading to better efficiency by forming optimal platoons with fewer reversal counts.

\begin{figure}[!ht]
    \centering
    \begin{overpic}[width=\columnwidth]{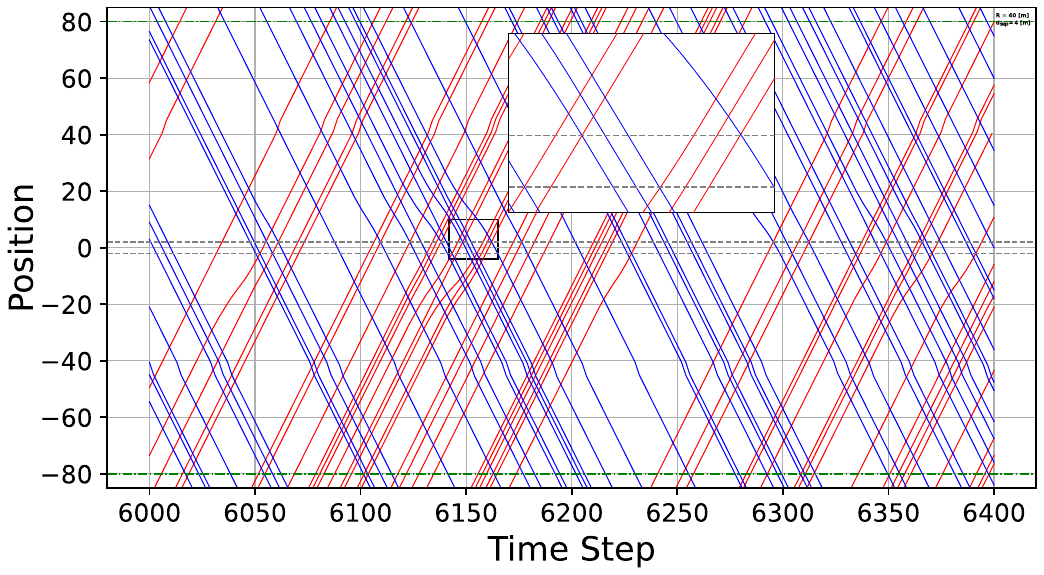}
    \end{overpic} 
    \caption{Excerpt of Space-time diagram of a control zone with a $40~\mathrm{m}$ radius, $d_{\text{safe}}=3~\mathrm{m}$, and $d_{\text{sep}}=4~\mathrm{m}$ and without fairness. Red and blue lines denote agents traveling along the x- and y-flows; the green dashed line indicates the control zone boundary, and the grey area marks the intersection. The highlighted section (indicated by a black square) shows platoon formation. For a dynamic view, please see the \href{https://www.youtube.com/watch?v=mqgiasBgRtE}{video link}.}
    \label{fig:ST_40}
\end{figure}

\begin{figure}[!ht]
    \centering
    \begin{overpic}[width=\columnwidth]{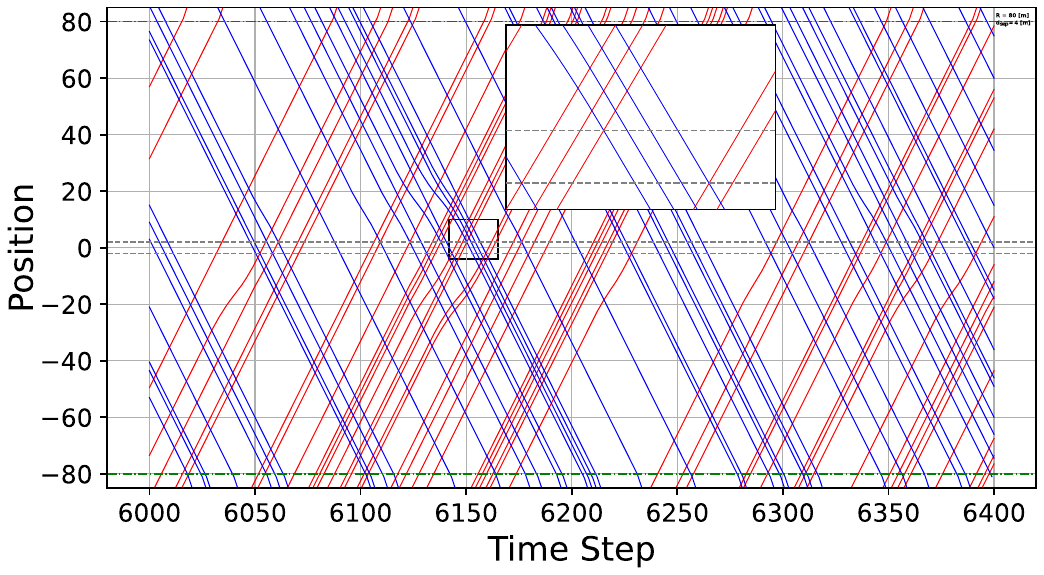}
    \end{overpic} 
    \caption{Excerpt of Space-time diagram of a control zone with a $80~\mathrm{m}$ radius, $d_{\text{safe}}=3~\mathrm{m}$, and $d_{\text{sep}}=4~\mathrm{m}$ and without fairness. Red and blue lines denote agents traveling along the x- and y-flows; the green dashed line indicates the control zone boundary, and the grey area marks the intersection. The highlighted section (indicated by a black square) shows platoon formation. For a dynamic view, please see the \href{https://www.youtube.com/watch?v=mqgiasBgRtE}{video link}.}
    \label{fig:ST_80}
\end{figure}

\begin{table}[ht]
  \centering
  \caption{Arrival vs.\ crossing order at $R = 40 ~\mathrm{m}$, $\Lambda=0.9~\mathrm{s}^{-1}$ (Fig.~\ref{fig:ST_40}).}
  \label{tab:R40_A_fourcols}
  \sffamily
  \begin{tabular}{@{}cccc@{}}
    \toprule
    \textbf{Arrival Order} &
    \textbf{Crossing Order} &
    \textbf{Flow Direction} &
    \textbf{Order shift} \\
    \midrule
    1016 & 1016 & x &  0 \\
    1018 & 1017 & x & -1 \\
    1017 & 1018 & y &  1 \\
    1020 & 1019 & y & -1 \\
    1019 & 1020 & x &  1 \\
    1022 & 1021 & x & -1 \\
    1021 & 1022 & y &  1 \\
    1024 & 1023 & y & -1 \\
    1025 & 1024 & y & -1 \\
    1023 & 1025 & x &  2 \\
    1026 & 1026 & x &  0 \\
    \bottomrule
  \end{tabular}
\end{table}

\begin{table}[ht]
  \centering
  \caption{Arrival vs.\ crossing order at $R = 80 ~\mathrm{m}$, $\Lambda=0.9~\mathrm{s}^{-1}$ (Fig.~\ref{fig:ST_80}).}
  \label{tab:R80_A_fourcols}
  \sffamily
  \begin{tabular}{@{}cccc@{}}
    \toprule
    \textbf{Arrival Order} &
    \textbf{Crossing Order} &
    \textbf{Flow Direction} &
    \textbf{Order shift} \\
    \midrule
    1016 & 1016 & x &  0 \\
    1018 & 1017 & x & -1 \\
    1017 & 1018 & y &  1 \\
    1020 & 1019 & y & -1 \\
    1019 & 1020 & x &  1 \\
    1022 & 1021 & x & -1 \\
    1023 & 1022 & x & -1 \\
    1021 & 1023 & y &  2 \\
    1024 & 1024 & y &  0 \\
    \bottomrule
  \end{tabular}
\end{table}

\section{CONCLUSION} \label{conclusion}



{In this paper, we investigated how the timing of centralized intervention shapes the balance between efficiency and fairness in autonomous intersection management. We formulated this coordination problem using an MILP model that optimizes agent trajectories for safety and efficiency within a defined control zone. A central contribution of this work is the integration of a reversal-based fairness constraint within the MILP. This framework enables a quantitative study of how fairness requirements influence overall system performance and platoon formation.}

Our results show that increasing the control zone radius from small values reduces delay and energy consumption up to a moderate range, after which performance saturates and slightly degrades with larger radii. This supports the existence of an optimal radius, which increases with traffic density. In addition, the system with a fairness constraint reduces FIFO order deviations but introduces a measurable cost in delay and energy compared to the system without a fairness constraint, primarily by limiting platoon formation. 

Future work will extend from a single intersection to multiple intersections by moving from two single lanes to four (or more) lanes with asymmetric demands and complex topologies at the intersection. In addition, we will pursue a formal, mathematical proof of the existence of optimal control zone radius and quantify coordination overhead as radius grows, to provide theoretical support for the observed saturation and slight performance degradation at large radii.
\section*{Acknowledgment}
This research was funded by King Abdullah University of Science and Technology’s baseline support (BAS$/1/1682-01-01$).
\bibliographystyle{IEEEtran}

\begin{thebibliography}{10}

\bibitem{10669156}
M.~Sarkar, O.~Kweon, B.-I. Kim, D.~G. Choi, and D.~Y. Kim, ``Synergizing autonomous and traditional vehicles: A systematic review of advances and challenges in traffic flow management with signalized intersections,'' {\em IEEE Transactions on Intelligent Transportation Systems}, vol.~25, no.~12, pp.~19203--19217, 2024.

\bibitem{dresner2008multiagent}
K.~Dresner and P.~Stone, ``A multiagent approach to autonomous intersection management,'' {\em Journal of Artificial Intelligence Research}, vol.~31, pp.~591--656, 2008.

\bibitem{nagrare2024intersection}
S.~R. Nagrare, A.~Ratnoo, and D.~Ghose, ``Intersection planning for multilane unmanned aerial vehicle traffic management,'' {\em Journal of Aerospace Information Systems}, vol.~21, no.~3, pp.~216--233, 2024.

\bibitem{CAVs3407903}
M.~Khayatian, M.~Mehrabian, E.~Andert, R.~Dedinsky, S.~Choudhary, Y.~Lou, and A.~Shirvastava, ``A survey on intersection management of connected autonomous vehicles,'' {\em ACM Transactions on Cyber-Physical Systems}, vol.~4, aug 2020.

\bibitem{sengupta2025urban}
R.~Sengupta, V.~Bulusu, C.~E. Mballo, E.~B. Onat, and S.~Cao, ``Urban air mobility research challenges and opportunities,'' {\em Annual Review of Control, Robotics, and Autonomous Systems}, vol.~8, 2025.

\bibitem{khanmohamadi2025smart}
M.~Khanmohamadi and M.~Guerrieri, ``Smart intersections and connected autonomous vehicles for sustainable smart cities: A brief review.,'' {\em Sustainability (2071-1050)}, vol.~17, no.~7, 2025.

\bibitem{kim2023optimal}
M.~Kim, M.~Schrader, H.-S. Yoon, and J.~A. Bittle, ``Optimal traffic signal control using priority metric based on real-time measured traffic information,'' {\em Sustainability}, vol.~15, no.~9, p.~7637, 2023.

\bibitem{7349244}
J.~Zhao, W.~Li, J.~Wang, and X.~Ban, ``Dynamic traffic signal timing optimization strategy incorporating various vehicle fuel consumption characteristics,'' {\em IEEE Transactions on Vehicular Technology}, vol.~65, no.~6, pp.~3874--3887, 2016.

\bibitem{little1966synchronization}
J.~D. Little, ``The synchronization of traffic signals by mixed-integer linear programming,'' {\em Operations Research}, vol.~14, no.~4, pp.~568--594, 1966.

\bibitem{11004127}
F.~Viadero-Monasterio, M.~Meléndez-Useros, H.~Zhang, B.~L. Boada, and M.~J.~L. Boada, ``Signalized traffic management optimizing energy efficiency under driver preferences for vehicles with heterogeneous powertrains,'' {\em IEEE Transactions on Consumer Electronics}, vol.~71, no.~2, pp.~3454--3464, 2025.

\bibitem{10833692}
H.~Guan, Q.~Bai, and Q.~Meng, ``A decentralized signal-free intersection control framework for connected and autonomous vehicles,'' {\em IEEE Transactions on Consumer Electronics}, vol.~71, no.~2, pp.~6193--6200, 2025.

\bibitem{8732975}
D.~Miculescu and S.~Karaman, ``Polling-systems-based autonomous vehicle coordination in traffic intersections with no traffic signals,'' {\em IEEE Transactions on Automatic Control}, vol.~65, no.~2, pp.~680--694, 2020.

\bibitem{dresner2004multiagent}
K.~Dresner and P.~Stone, ``Multiagent traffic management: A reservation-based intersection control mechanism,'' in {\em Autonomous Agents and Multiagent Systems, International Joint Conference on}, vol.~3, pp.~530--537, Citeseer, 2004.

\bibitem{carlino2013auction}
D.~Carlino, S.~D. Boyles, and P.~Stone, ``Auction-based autonomous intersection management,'' in {\em 16th International IEEE Conference on Intelligent Transportation Systems (ITSC 2013)}, pp.~529--534, IEEE, 2013.

\bibitem{tlig2012cooperative}
M.~Tlig, O.~Buffet, and O.~Simonin, ``Cooperative behaviors for the self-regulation of autonomous vehicles in space sharing conflicts,'' in {\em 2012 IEEE 24th International Conference on Tools with Artificial Intelligence}, vol.~1, pp.~1126--1132, IEEE, 2012.

\bibitem{9535372}
Z.~Wu and B.~Waterson, ``Urban intersection management strategies for autonomous/connected/conventional vehicle fleet mixtures,'' {\em IEEE Transactions on Intelligent Transportation Systems}, vol.~23, no.~8, pp.~12084--12093, 2022.

\bibitem{nisyrios2025optimization}
E.~Nisyrios and K.~Gkiotsalitis, ``Optimization-based approaches for traffic and fleet management of connected and autonomous vehicles: A systematic literature review,'' {\em International Journal of Intelligent Transportation Systems Research}, pp.~1--23, 2025.

\bibitem{thangavel2024simulation}
K.~Thangavel, E.~Nour, K.~F. Hussain, A.~Gardi, and R.~Sabatini, ``A simulation tool for space traffic management,'' in {\em 2024 AIAA DATC/IEEE 43rd Digital Avionics Systems Conference (DASC)}, pp.~1--6, IEEE, 2024.

\bibitem{sorge2025space}
M.~Sorge, D.~Alary, F.~Lacomba, H.~Tung, and B.~Singh, ``Space traffic management: Large constellations,'' {\em Acta Astronautica}, vol.~229, pp.~698--704, 2025.

\bibitem{hardell2025optimizing}
H.~Hardell, E.~Otero, T.~Polishchuk, and L.~Smetanov{\'a}, ``Optimizing air traffic management through point merge procedures: Minimizing delays and environmental impact in arrival operations,'' {\em Journal of Air Transport Management}, vol.~123, p.~102706, 2025.

\bibitem{928721}
Z.-H. Mao, E.~Feron, and K.~Bilimoria, ``Stability and performance of intersecting aircraft flows under decentralized conflict avoidance rules,'' {\em IEEE Transactions on Intelligent Transportation Systems}, vol.~2, no.~2, pp.~101--109, 2001.

\bibitem{5717044}
M.~Lupu, E.~Feron, and Z.-H. Mao, ``Traffic complexity of intersecting flows of aircraft under variations of pilot preferences in maneuver choice,'' in {\em 49th IEEE Conference on Decision and Control (CDC)}, pp.~1189--1194, 2010.

\bibitem{frazzoli2001resolution}
E.~Frazzoli, Z.-H. Mao, J.-H. Oh, and E.~Feron, ``Resolution of conflicts involving many aircraft via semidefinite programming,'' {\em Journal of Guidance, Control, and Dynamics}, vol.~24, no.~1, pp.~79--86, 2001.

\bibitem{4298903}
Z.-H. Mao, D.~Dugail, and E.~Feron, ``Space partition for conflict resolution of intersecting flows of mobile agents,'' {\em IEEE Transactions on Intelligent Transportation Systems}, vol.~8, no.~3, pp.~512--527, 2007.

\bibitem{wurman2008coordinating}
P.~R. Wurman, R.~D'Andrea, and M.~Mountz, ``Coordinating hundreds of cooperative, autonomous vehicles in warehouses,'' {\em AI magazine}, vol.~29, no.~1, pp.~9--9, 2008.

\bibitem{tlig2014decentralized}
M.~Tlig, O.~Buffet, and O.~Simonin, ``Decentralized traffic management: A synchronization-based intersection control,'' in {\em 2014 International Conference on Advanced Logistics and Transport (ICALT)}, pp.~109--114, IEEE, 2014.

\bibitem{jun2022towards}
L.~Z. Jun, S.~Alam, I.~Dhief, and M.~Schultz, ``Towards a greener extended-arrival manager in air traffic control: A heuristic approach for dynamic speed control using machine-learned delay prediction model,'' {\em Journal of Air Transport Management}, vol.~103, p.~102250, 2022.

\bibitem{kleinberg1999fairness}
J.~Kleinberg, Y.~Rabani, and {\'E}.~Tardos, ``Fairness in routing and load balancing,'' in {\em 40th Annual Symposium on Foundations of Computer Science (Cat. No. 99CB37039)}, pp.~568--578, IEEE, 1999.

\bibitem{marsh1994equity}
M.~T. Marsh and D.~A. Schilling, ``Equity measurement in facility location analysis: A review and framework,'' {\em European Journal of Operational Research}, vol.~74, no.~1, pp.~1--17, 1994.

\bibitem{bertsimas2009fairness}
D.~Bertsimas and S.~Gupta, ``Fairness in air traffic flow management,'' in {\em INFORMS Meeting, San Diego-CA, USA}, vol.~54, p.~100, 2009.

\bibitem{kumar2006fairness}
A.~Kumar and J.~Kleinberg, ``Fairness measures for resource allocation,'' {\em SIAM Journal on Computing}, vol.~36, no.~3, pp.~657--680, 2006.

\bibitem{bertsimas2011price}
D.~Bertsimas, V.~F. Farias, and N.~Trichakis, ``The price of fairness,'' {\em Operations Research}, vol.~59, no.~1, pp.~17--31, 2011.

\bibitem{soomer2008fairness}
M.~Soomer and G.~Koole, ``Fairness in the aircraft landing problem,'' {\em Proceedings of the Anna Valicek Competition}, 2008.

\bibitem{chin2020tradeoffs}
C.~Chin, K.~Gopalakrishnan, H.~Balakrishnan, M.~Egorov, and A.~Evans, ``Tradeoffs between efficiency and fairness in unmanned aircraft systems traffic management,'' in {\em 9th International Conference on Research in Air Transportation}, pp.~1--8, The European Organisation for the Safety of Air Navigation, 2020.

\bibitem{chin2022efficient}
C.~Chin, K.~Gopalakrishnan, H.~Balakrishnan, M.~Egorov, and A.~Evans, ``Efficient and fair traffic flow management for on-demand air mobility,'' {\em CEAS Aeronautical Journal}, vol.~13, no.~2, pp.~359--369, 2022.

\bibitem{bertsimas2016fairness}
D.~Bertsimas and S.~Gupta, ``Fairness and collaboration in network air traffic flow management: An optimization approach,'' {\em Transportation Science}, vol.~50, no.~1, pp.~57--76, 2016.

\bibitem{barnhart2012equitable}
C.~Barnhart, D.~Bertsimas, C.~Caramanis, and D.~Fearing, ``Equitable and efficient coordination in traffic flow management,'' {\em Transportation science}, vol.~46, no.~2, pp.~262--280, 2012.

\bibitem{4840432}
A.~Chakravarthy, K.~Song, and E.~Feron, ``Preventing automotive pileup crashes in mixed-communication environments,'' {\em IEEE Transactions on Intelligent Transportation Systems}, vol.~10, no.~2, pp.~211--225, 2009.

\bibitem{1470286}
A.~Chakravarthy, K.~Y. Song, and E.~Feron, ``Influence of a slowdown warning system on a multi-vehicle stream,'' in {\em Proceedings of the 2005, American Control Conference, 2005.}, pp.~2134--2140 vol. 3, 2005.

\bibitem{hall2005vehicle}
R.~Hall and C.~Chin, ``Vehicle sorting for platoon formation: Impacts on highway entry and throughput,'' {\em Transportation Research Part C: Emerging Technologies}, vol.~13, no.~5-6, pp.~405--420, 2005.

\bibitem{9246221}
L.~Jin, M.~Čičić, K.~H. Johansson, and S.~Amin, ``Analysis and design of vehicle platooning operations on mixed-traffic highways,'' {\em IEEE Transactions on Automatic Control}, vol.~66, no.~10, pp.~4715--4730, 2021.

\bibitem{10021253}
Z.~Deng, K.~Yang, W.~Shen, and Y.~Shi, ``Cooperative platoon formation of connected and autonomous vehicles: Toward efficient merging coordination at unsignalized intersections,'' {\em IEEE Transactions on Intelligent Transportation Systems}, vol.~24, no.~5, pp.~5625--5639, 2023.

\bibitem{11108008}
S.~Ghori, A.~Adil, and E.~Feron, ``Early versus late traffic management for autonomous agents,'' in {\em 2025 American Control Conference (ACC)}, pp.~1002--1008, 2025.

\bibitem{schouwenaars2001mixed}
T.~Schouwenaars, B.~De~Moor, E.~Feron, and J.~How, ``Mixed integer programming for multi-vehicle path planning,'' in {\em 2001 European Control Conference}, pp.~2603--2608, IEEE, 2001.

\bibitem{garcia1989model}
C.~E. Garcia, D.~M. Prett, and M.~Morari, ``Model predictive control: Theory and practice—a survey,'' {\em Automatica}, vol.~25, no.~3, pp.~335--348, 1989.

\bibitem{schouwenaars2004receding}
T.~Schouwenaars, J.~How, and E.~Feron, ``Receding horizon path planning with implicit safety guarantees,'' in {\em Proceedings of the 2004 American Control Conference}, vol.~6, pp.~5576--5581, IEEE, 2004.

\bibitem{gurobi}
{Gurobi Optimization, LLC}, ``{Gurobi Optimizer Reference Manual},'' 2023.

\end{thebibliography}

\end{document}